\title{Necessity of Hyperbolic Absolute Risk Aversion for the Concavity of Consumption Functions}
\author{Alexis Akira Toda\thanks{Department of Economics, University of California San Diego. Email: \href{mailto:atoda@ucsd.edu}{atoda@ucsd.edu}.}}
\begin{document}

\maketitle

\begin{abstract}

\cite{CarrollKimball1996} have shown that, in the class of utility functions that are strictly increasing, strictly concave, and have nonnegative third derivatives, hyperbolic absolute risk aversion (HARA) is sufficient for the concavity of consumption functions in general consumption-saving problems. This paper shows that HARA is necessary, implying the concavity of consumption is not a robust prediction outside the HARA class.

\medskip

\textbf{Keywords:} concavity, consumption function, hyperbolic absolute risk aversion, robust predictions.

\medskip

\textbf{JEL codes:} C65, D11, D14.
\end{abstract}

\section{Introduction}

The notion that the marginal propensity to consume decreases with wealth, or that the consumption function is concave, dates back at least to \cite{keynes1936}. This observation is important in macroeconomics because the effect of a fiscal transfer of one dollar to a wealthy household is smaller than that to a poor household, implying that fiscal policies need to account for household heterogeneity. In an important contribution, \cite{CarrollKimball1996} have shown that in the class of utility functions that are strictly increasing, are strictly concave, and have nonnegative third derivatives, hyperbolic absolute risk aversion (HARA) is sufficient for the concavity of consumption functions in finite-horizon consumption-saving problems without liquidity constraints. Their result has been extended in several directions: \citet[Section 5]{CarrollKimball2001} obtain concavity under finite-horizon, HARA utility, and liquidity constraints; \cite{NishiyamaKato2012} obtain concavity under infinite-horizon, quadratic utility, and liquidity constraints; \citet[Proposition 2.5 and Remark 2.1]{MaStachurskiToda2020JET} obtain concavity under infinite horizon, constant relative risk aversion (CRRA) utility, and liquidity constraints.\footnote{\label{fn:MST}It is clear from the proof technique of \cite{MaStachurskiToda2020JET} that concavity obtains in finite- or infinite-horizon, with or without liquidity constraints, and HARA utility if the domain of the utility function is appropriately modified. See Section \ref{sec:discussion} for more discussion.}

In all of these theoretical papers, the utility function is restricted to be HARA. Thus a natural question is whether it is possible to obtain the concavity of consumption functions under weaker assumptions: does concavity hold in a larger class than HARA, or is concavity a non-robust prediction that fails to hold outside the HARA class? This paper provides a definitive answer to this question, which is the latter. More precisely, I show that if the utility function is strictly increasing, is strictly concave, has a positive third derivative, but is not HARA, then there exists a finite-horizon (in fact, one period) consumption-saving problem such that the consumption function is not concave. Combined with the earlier results on the concavity of consumption functions just cited, my result shows that in the class of natural utility functions, HARA is both necessary and sufficient for the concavity of consumption functions in general consumption-saving problems.

\section{Main result}\label{sec:main}

Following \cite{CarrollKimball1996}, I consider a general consumption-saving problem with a finite horizon. The problem can be informally described as follows. The agent is endowed with initial wealth $w_0>0$ and lives for $T$ periods. The period utility function is $u:(0,\infty)\to \R$, which is strictly increasing and concave. The agent receives income $Y_t\ge 0$ at the beginning of time $t$. The gross return on wealth between time $t-1$ and $t$ is $R_t>0$. The agent discounts utility between time $t-1$ and $t$ using the discount factor $\beta_t>0$, where $\beta_0\equiv 1$. Letting $c_t>0$ be the consumption at time $t$ and $w_t\ge 0$ the financial wealth at the beginning of time $t$, the agent's objective is to solve
\begin{subequations}\label{eq:IF}
\begin{align}
&\maximize && \E_0\sum_{t=0}^T\left(\prod_{s=0}^t\beta_s\right)u(c_t)\label{eq:IF_utility}\\
&\st && w_{t+1}=R_{t+1}(w_t-c_t)+Y_{t+1}\ge 0,\label{eq:IF_budget}
\end{align}
\end{subequations}
where the initial wealth $w_0=w>0$ is given and \eqref{eq:IF_budget} is the budget and borrowing constraint. The discount factor $\beta_t$, return on wealth $R_t$, and income $Y_t$ can all be stochastic.

If a solution to the consumption-saving problem \eqref{eq:IF} exists, the time $t$ consumption $c_t$ can be viewed as a function of the current wealth $w_t$, which we call the consumption function. The main result of this paper is that if the utility function satisfies some regularity conditions and the consumption function is always concave regardless of the specification of the process $(\beta_t,R_t,Y_t)_{t=1}^T$, then $u$ must exhibit hyperbolic absolute risk aversion (HARA).

We now make this statement more precise. If the consumption function is concave regardless of the specification of the process $(\beta_t,R_t,Y_t)_{t=1}^T$, then in particular the consumption function in a one period problem ($T=1$) must always be concave. We can then rewrite \eqref{eq:IF} as the static maximization problem
\begin{equation}
\max_c u(c)+\E[\beta u(R(w-c)+Y)], \label{eq:IFstat}
\end{equation}
where $\beta,R,Y>0$ are arbitrary random variables and $c$ satisfies $R(w-c)+Y\ge 0$. Furthermore, if $\beta,R,Y$ are arbitrary, in particular we may restrict attention to problems in which $\beta,R,Y$ take finitely many values. In this case the expectation in \eqref{eq:IFstat} is a finite sum and always well-defined.

The following lemma shows that under standard monotonicity, concavity, and Inada conditions, the consumption and saving functions can be unambiguously defined, which are differentiable and strictly increasing.

\begin{lem}\label{lem:cfunc}
Suppose that
\begin{inparaenum}[(i)]
\item the utility function $u:[0,\infty)\to \R\cup \set{-\infty}$ is twice continuously differentiable on $(0,\infty)$ and satisfies $u'>0$, $u''<0$, and $\lim_{x\downarrow 0}u'(x)=\infty$, and
\item the random vector $(\beta,R,Y)\gg 0$ has finite support.
\end{inparaenum}
Then for any initial wealth $w>0$, the consumption-saving problem \eqref{eq:IFstat} has a unique solution, and the consumption function $c=c(w)$ and saving function $s(w)=w-c(w)$ satisfy the Euler equation
\begin{equation}
u'(c(w))=\E[\beta Ru'(Rs(w)+Y)].\label{eq:euler}
\end{equation}
Furthermore, $c,s$ are continuously differentiable and $c'(w),s'(w)\in (0,1)$.
\end{lem}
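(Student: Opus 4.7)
The plan is to handle the three claims (existence/uniqueness, Euler equation, and differentiability with derivative in $(0,1)$) in order, using standard concavity and implicit-function-theorem arguments.

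\textbf{Existence, uniqueness, Euler equation.} First I would fix $w>0$ and define the objective $f(c) = u(c) + \E[\beta u(R(w-c)+Y)]$. Since $(\beta,R,Y)$ has finite support and is strictly positive, the feasible set $\{c : R(w-c)+Y\ge 0 \text{ almost surely}\}$ is a nonempty compact interval of the form $[0,\bar{c}]$ with $\bar{c}\le w + \min_\omega Y(\omega)/R(\omega)$. Strict concavity of $u$, plus linearity of the budget map and positivity of the probability weights, make $f$ strictly concave wherever finite; together with upper semicontinuity this gives existence and uniqueness of the maximizer $c(w)$. The Inada condition $u'(0^+)=\infty$ rules out $c=0$, while the same condition applied to the continuation utility $u(R(w-c)+Y)$ at each realization (combined with $u'>0$) rules out $c=\bar{c}$. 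Hence $c(w)$ is interior and the first-order condition yields the Euler equation \eqref{eq:euler}; differentiation under the finite expectation is automatic.

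\textbf{Differentiability and derivative bounds.} Next I would apply the implicit function theorem to
\[
F(c,w) := u'(c) - \E[\beta R u'(R(w-c)+Y)].
\]
Because $u\in C^2$ on $(0,\infty)$ and the expectation is a finite sum over points where the argument $R(w-c(w))+Y$ is strictly positive (by interiority), $F$ is $C^1$ on a neighborhood of the graph of $c(\cdot)$. Computing,
\[
F_c = u''(c) + \E[\beta R^2 u''(R(w-c)+Y)] < 0, \qquad F_w = -\E[\beta R^2 u''(R(w-c)+Y)] > 0,
\]
so $F_c\ne 0$ and IFT gives $c\in C^1$ with
\[
c'(w) = -\frac{F_w}{F_c} = \frac{\E[\beta R^2 u''(R s(w)+Y)]}{u''(c(w)) + \E[\beta R^2 u''(R s(w)+Y)]}.
\]
Both numerator and denominator are negative (so $c'>0$), and since the denominator is strictly more negative than the numerator by the extra term $u''(c(w))<0$, we obtain $c'(w)<1$. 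Then $s(w) = w-c(w)$ is $C^1$ with $s'(w) = 1-c'(w)\in(0,1)$.

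\textbf{Main obstacle.} The only delicate point is verifying interiority so that IFT applies on a neighborhood where $u''$ is well-defined at all realized arguments. I would take care to argue that the corner $c=\bar{c}$, where $R(w-c)+Y$ vanishes on at least one state, is excluded by Inada applied state-by-state, and that for $w$ in a small neighborhood of any fixed $w_0$ the optimal $c(w)$ remains bounded away from $0$ and from $\bar{c}(w)$, so the finite-sum expectation is a smooth function of $(c,w)$ there. The rest is routine.
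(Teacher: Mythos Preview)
Your proposal is correct and follows essentially the same approach as the paper: strict concavity of the objective for uniqueness, the Inada condition to force interiority, the first-order condition for the Euler equation, and the implicit function theorem applied to $F(c,w)=u'(c)-\E[\beta R u'(R(w-c)+Y)]$ for $C^1$ regularity and the bound $c'(w)\in(0,1)$. The only cosmetic difference is that the paper argues existence by applying the intermediate value theorem to $f'$ on the open interval $(0,\bar c)$ (using $f'(0^+)=\infty$ and $f'(\bar c^-)=-\infty$), whereas you invoke compactness and upper semicontinuity before ruling out the endpoints; both routes are standard and lead to the same interior critical point.
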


In what follows, we maintain the assumptions of Lemma \ref{lem:cfunc}. Since the saving function $s$ is continuously differentiable and strictly increasing, its range $S\coloneqq s((0,\infty))$ is an open interval of $\R$. Since $u'$ is strictly decreasing, by \eqref{eq:euler} for each $s\in S$ there exists a unique $c\in (0,\infty)$ such that $u'(c)=\E[\beta Ru'(Rs+Y)]$. Therefore we can unambiguously define the function $g:S\to (0,\infty)$ by
\begin{equation}
g(s)\coloneqq (u')^{-1}(\E[\beta Ru'(Rs+Y)]).\label{eq:defg}
\end{equation}
The function $g$ in \eqref{eq:defg} returns the consumption level $c>0$ that implies the saving $s\in S$. Since $u'$ is strictly decreasing, $g$ is clearly strictly increasing. \citet[Proposition 2.5]{MaStachurskiToda2020JET} show that the concavity of $g$ is sufficient for the concavity of the consumption function. The following lemma shows that the concavity of $g$ is actually necessary.

\begin{lem}\label{lem:gconcave}
If $c$ is concave, then so is $g$ in \eqref{eq:defg}.
\end{lem}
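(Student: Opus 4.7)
The plan is to exploit the identity $c(w) = g(s(w))$ that follows from combining the Euler equation \eqref{eq:euler} with the definition \eqref{eq:defg} of $g$, then invert $s$ and use composition rules for concave functions.

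First I would observe that by \eqref{eq:euler} and \eqref{eq:defg}, we have $u'(c(w)) = u'(g(s(w)))$ for every $w>0$, and since $u'$ is strictly decreasing this yields the key identity $c(w) = g(s(w))$. By Lemma \ref{lem:cfunc}, $s$ is continuously differentiable and strictly increasing from $(0,\infty)$ onto the open interval $S$, so its inverse $s^{-1}:S\to(0,\infty)$ exists, is strictly increasing, and is continuously differentiable. Therefore we may write
\begin{equation*}
g = c\circ s^{-1}.
\end{equation*}

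Next I would show that $s$ is convex. Since $s(w)=w-c(w)$, we have $s'(w)=1-c'(w)$. The hypothesis that $c$ is concave implies $c'$ is nonincreasing, hence $s'$ is nondecreasing, so $s$ is convex. Because $s$ is strictly increasing and convex, its inverse $s^{-1}$ is (strictly increasing and) concave: for $\sigma_1,\sigma_2\in S$ and $\lambda\in[0,1]$, applying $s$ to both sides of the desired inequality and using monotonicity reduces it to the convexity of $s$.

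Finally I would compose. By Lemma \ref{lem:cfunc}, $c$ is strictly increasing, and by hypothesis $c$ is concave. The composition of a concave nondecreasing function with a concave function is concave: for $\sigma_1,\sigma_2\in S$ and $\lambda\in[0,1]$, concavity of $s^{-1}$ gives
\begin{equation*}
s^{-1}(\lambda\sigma_1+(1-\lambda)\sigma_2)\ge \lambda s^{-1}(\sigma_1)+(1-\lambda)s^{-1}(\sigma_2),
\end{equation*}
and then applying the nondecreasing concave function $c$ to both sides yields
\begin{equation*}
g(\lambda\sigma_1+(1-\lambda)\sigma_2)\ge c\bigl(\lambda s^{-1}(\sigma_1)+(1-\lambda)s^{-1}(\sigma_2)\bigr)\ge \lambda g(\sigma_1)+(1-\lambda)g(\sigma_2).
\end{equation*}
Hence $g$ is concave.

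There is no real obstacle here; the entire argument is essentially a bookkeeping exercise once one notices the identity $c = g\circ s$. The only point that requires any care is verifying that $s$ maps $(0,\infty)$ onto an honest interval so that the composition $c\circ s^{-1}$ is defined on all of $S$, but this is given by Lemma \ref{lem:cfunc}.
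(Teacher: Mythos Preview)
Your argument is correct and is in fact cleaner than the paper's. Both proofs rest on the same identity $c=g\circ s$, but the paper argues by contradiction: assuming $g$ violates concavity at some $(s_1,s_2,\alpha)$, it pulls back via $s^{-1}$ to $(w_1,w_2)$, uses concavity of $c$ to get $c(\bar w)\ge\bar c$, and then chains inequalities (using that $g$ is increasing) to reach $c(\bar w)>c(\bar w)$. You instead work directly: from $g=c\circ s^{-1}$, concavity of $c$ makes $s=\mathrm{id}-c$ convex, hence $s^{-1}$ concave, and then the standard composition rule (concave increasing composed with concave is concave) finishes. Your route makes the structure transparent and avoids the contradiction detour; the paper's version has the minor virtue of never explicitly invoking the inverse-of-convex-is-concave fact, but that fact is elementary and you verify it anyway. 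Either way the proof is essentially one line once the identity $c=g\circ s$ is noted.
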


The following lemma, which is closely related to \citet[Section 3.16]{HardyLittlewoodPolyaInequalities}, plays a crucial role in the proof of the main result.

\begin{lem}\label{lem:HLP}
Let $I\subset \R$ be an open interval and $\phi:I \to \R$ be a twice differentiable function such that $\phi(I)=(0,\infty)$, $\phi'<0$, and either $\phi''>0$ on $I$ or $\phi''\equiv 0$ on $I$. For $p,x,v\in \R^N$ with $p,v\gg 0$ and $x\in I^N$, let
\begin{equation*}
g(s;p,x,v)\coloneqq \phi^{-1}\left(\sum_{n=1}^N p_n\phi(x_n+v_ns)\right),
\end{equation*}
which is well-defined in the neighborhood of $s=0$. Then the following statements are true:
\begin{enumerate}
\item If $g(s;p,x,v)$ is concave in the neighborhood of $s=0$ for arbitrary $p,x,v$, then $\phi$ and $I$ take one of the following forms:
\begin{subequations}\label{eq:phiI}
\begin{align}
&\phi(x)=c(ax+b)^{-1/a}, && 1<a<0, &&I=(-\infty,-b/a), \label{eq:phiI.a}\\
&\phi(x)=-bc\e^{-x/b}, &&b>0, &&I=\R,~\text{or} \label{eq:phiI.b}\\
&\phi(x)=c(ax+b)^{-1/a}, &&a>0, &&I=(-b/a,\infty), \label{eq:phiI.c}
\end{align}
\end{subequations}
where $c>0$ is arbitrary.
\item Conversely, if $\phi$ and $I$ take one of the forms in \eqref{eq:phiI}, then $g(s;p,x,v)$ is concave in $s$ on its domain for arbitrary $p,x,v$.
\end{enumerate}
\end{lem}

Note that in either case in Lemma \ref{lem:HLP}, log-differentiating $\phi$, we obtain
\begin{equation}
\frac{\phi'(x)}{\phi(x)}=-\frac{1}{ax+b},\label{eq:HARA}
\end{equation}
where $a>-1$ with $I=\set{x\in \R:ax+b>0}$. We can now state the main result.

\begin{thm}\label{thm:cconcave}
Suppose that
\begin{inparaenum}[(i)]
\item the utility function $u:[0,\infty)\to \R\cup \set{-\infty}$ is three times differentiable on $(0,\infty)$ and satisfies $u'>0$, $u''<0$, $u'''>0$, $\lim_{x\downarrow 0}u'(x)=\infty$, and $\lim_{x\to\infty}u'(x)=0$, and
\item the random vector $(\beta,R,Y)\gg 0$ has finite support.
\end{inparaenum}
If the consumption function in Lemma \ref{lem:cfunc} is concave for arbitrary distribution of $(\beta,R,Y)$, then $u$ exhibits constant relative risk aversion (CRRA). Conversely, if $u$ is CRRA, then the consumption function of the consumption-saving problem \eqref{eq:IF} is concave.
\end{thm}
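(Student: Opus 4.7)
The converse direction (CRRA implies concavity of $c$) is already supplied by \citet[Proposition 2.5 and Remark 2.1]{MaStachurskiToda2020JET} together with footnote \ref{fn:MST}, so the substantive work is the forward direction. The plan is to chain Lemmas \ref{lem:cfunc}, \ref{lem:gconcave}, and \ref{lem:HLP} together with the Inada conditions imposed on $u$.

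Assume the consumption function $c$ is concave for every finite-support distribution of $(\beta,R,Y)\gg 0$. By Lemma \ref{lem:gconcave}, the function $g$ defined in \eqref{eq:defg} is then concave on its domain $S$ for every such distribution. I would then recognize $g$ as an instance of the function $g(\cdot;p,x,v)$ of Lemma \ref{lem:HLP} with $\phi=u'$: given arbitrary $p,v\in(0,\infty)^N$ and $x\in(0,\infty)^N$, setting $R_n=v_n$, $Y_n=x_n$, $\pi_n=1/N$, and $\beta_n=Np_n/v_n$ defines a valid distribution for which $\E[\beta Ru'(Rs+Y)]=\sum_n p_n u'(x_n+v_ns)$, so $g$ in \eqref{eq:defg} coincides with $g(\cdot;p,x,v)$. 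Lemma \ref{lem:HLP} requires concavity in a neighborhood of $s=0$, so I would handle the fact that the relevant saving $s^*\in S$ need not equal $0$ by a shift, observing that concavity of $g(\cdot;p,x,v)$ at $s^*$ is equivalent to concavity of $g(\cdot;p,x+s^*v,v)$ at $0$, and that $x+s^*v$ still ranges over $(0,\infty)^N$ as $(x,s^*)$ varies.

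With the hypothesis of Lemma \ref{lem:HLP} verified, the CRRA conclusion follows from the Inada assumptions. Because $u''<0$ and $u'(0+)=\infty$, $u'(\infty)=0$, the map $\phi=u'$ is a strictly decreasing bijection $(0,\infty)\to(0,\infty)$, so $I=(0,\infty)$ and $\phi(I)=(0,\infty)$. Among the three admissible $(\phi,I)$ pairs, the first has $I$ bounded above and the second has $I=\R$, neither of which matches $I=(0,\infty)$; additionally, the exponential form in the second case has a finite limit at every finite $x$, contradicting $u'(0+)=\infty$. Only the third case, $I=(-b/a,\infty)$ with $a>0$, can yield $I=(0,\infty)$, and this forces $b=0$. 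Consequently $u'(x)=c(ax)^{-1/a}$, so with $\gamma:=1/a>0$ the marginal utility is proportional to $x^{-\gamma}$ and integrating recovers the CRRA $u$ up to an affine transformation (which is consistent with $u'''>0$).

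The main obstacle is the passage from \emph{concavity of $g$ on $S$ for specific distributions} to \emph{local concavity of $g(\cdot;p,x,v)$ at $s=0$ for all positive $(p,x,v)$}: it requires both the realization argument that produces arbitrary $(p,x,v)$ from legitimate distributions, and the shifting argument that re-centers the analysis at $s=0$, so that the universal quantifier over distributions in the theorem's hypothesis is fully exploited and the structural characterization in Lemma \ref{lem:HLP} can be applied.
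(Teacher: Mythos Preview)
Your proposal follows essentially the same path as the paper: invoke Lemma~\ref{lem:gconcave} to transfer concavity of $c$ to concavity of $g$, realize $g$ as an instance of the function in Lemma~\ref{lem:HLP} with $\phi=u'$ (your construction $\pi_n=1/N$, $\beta_n=Np_n/v_n$, $R_n=v_n$, $Y_n=x_n$ matches the paper's identification $p_n=\pi_n\beta_nR_n$), and then use the Inada conditions to single out the CRRA form among the three possibilities.

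The one point of divergence is how you guarantee concavity \emph{at $s=0$} for arbitrary $(p,x,v)$. You propose a shift, but the assertion that ``$x+s^*v$ still ranges over $(0,\infty)^N$ as $(x,s^*)$ varies'' is not justified as stated: $s^*$ is constrained to lie in $S=S(p,x,v)$, which itself depends on $x$, so it is not a priori clear that every target $x'\in(0,\infty)^N$ is hit. The paper bypasses the shift entirely by proving directly that $0\in S$ for every distribution. If $S\subset(0,\infty)$ then $c(w)<w$ for all $w$, and the Euler equation gives $u'(w)<u'(c(w))=\E[\beta Ru'(Rs(w)+Y)]\le\E[\beta Ru'(Y)]<\infty$, which contradicts $u'(0+)=\infty$ as $w\downarrow 0$; if $S\subset(-\infty,0)$ the symmetric bound contradicts $u'(\infty)=0$ as $w\uparrow\infty$. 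With $0\in S$ established, your shift becomes unnecessary (take $s^*=0$, $x=x'$), and conversely the cleanest way to justify your range claim is precisely this two-case argument---so the Inada conditions are doing work at this step, not only in the final identification of $\phi$.
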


\begin{proof}
Let $c,s$ be the consumption and saving functions. By Lemma \ref{lem:cfunc}, $c,s$ are continuously differentiable, strictly increasing, and $S=s((0,\infty))$ is an open interval of $\R$. Let us show $0\in S$. If $0\notin S$, then either $S\subset (0,\infty)$ or $S\subset (-\infty,0)$.

If $S\subset (0,\infty)$, then $s(w)>0$ for all $w$. Since $c(w)+s(w)=w$, we have $c(w)<w$. Using the Euler equation \eqref{eq:euler} and $u''<0$, we obtain
\begin{equation*}
u'(w)<u'(c(w))=\E[\beta Ru'(Rs(w)+Y)]\le \E[\beta Ru'(Y)]<\infty.
\end{equation*}
Letting $w\downarrow 0$, we obtain a contradiction because $u'(0)=\infty$.

If $S\subset (-\infty,0)$, then $s(w)<0$ for all $w$. Since $c(w)+s(w)=w$, we have $c(w)>w$. Again by \eqref{eq:euler} and $u''<0$, we obtain
\begin{equation*}
u'(w)>u'(c(w))=\E[\beta Ru'(Rs(w)+Y)]\ge \E[\beta Ru'(Y)]>0.
\end{equation*}
Letting $w\uparrow \infty$, we obtain a contradiction because $u'(\infty)=0$.

The above argument shows that $0\in S$ regardless of the specification of $(\beta,R,Y)$. Since $(\beta,R,Y)$ has finite support, we can write $g(s)$ in \eqref{eq:defg} as
\begin{equation}
g(s)=(u')^{-1}\left(\sum_{n=1}^N\pi_n\beta_nR_nu'(R_ns+Y_n)\right),\label{eq:gs}
\end{equation}
where $\pi_n>0$ is the probability of state $n$. Since $(\beta_n,R_n,Y_n)\gg 0$ is arbitrary, we can apply Lemma \ref{lem:HLP} for $\phi=u'$, $I=(0,\infty)$, $p_n=\pi_n\beta_nR_n$, $x_n=Y_n$, and $v_n=R_n$, yielding $u'(x)=\phi(x)=c(ax+b)^{-1/a}$ for some $a,c>0$ and $b=0$. Since the multiplicative constant $c$ does not affect the ordering of utility, we may assume $u'(x)=x^{-\gamma}$ with $\gamma=1/a>0$, so $u$ is CRRA.

The sufficiency of CRRA for the concavity of consumption can be shown by the same argument as in \citet[Proposition 2.5 and Remark 2.1]{MaStachurskiToda2020JET}.
\end{proof}

One limitation of Theorem \ref{thm:cconcave} is that to derive the strong conclusion that the utility function is CRRA, we require the strong assumption that the distribution of $(\beta,R,Y)$ is arbitrary. The following corollary shows that we obtain the same conclusion even if the discount factor $\beta$ is exogenously fixed at a constant value.

\begin{cor}\label{cor:exobeta}
Suppose that the assumptions of Theorem \ref{thm:cconcave} hold and the discount factor $\beta>0$ is exogenously given. If the consumption function in Lemma \ref{lem:cfunc} is concave for arbitrary distribution of $(R,Y)$, then $u$ is CRRA.
\end{cor}

\begin{proof}
If the consumption function $c$ is concave, by Lemma \ref{lem:gconcave} and the proof of Theorem \ref{thm:cconcave}, $g$ in \eqref{eq:gs} is concave in the neighborhood of $0\in S$. Therefore setting $\beta_n=\beta$ and $s=ks$ for any constant $\beta,k>0$, the function
\begin{align}
h(s)\coloneqq g(ks)&=(u')^{-1}\left(\sum_{n=1}^N\pi_n\beta R_nu'(R_nks+Y_n)\right) \notag \\
&=(u')^{-1}\left(\sum_{n=1}^N p_nu'(x_n+v_ns)\right) \label{eq:defh}
\end{align}
is concave in the neighborhood of $0\in S$, where
\begin{equation}
p_n=\pi_n \beta R_n, \quad x_n=Y_n, \quad \text{and} \quad v_n=R_nk. \label{eq:pxYn}
\end{equation}
Since $\frac{p_n}{v_n}=\frac{\beta \pi_n}{k}$ and $\sum_{n=1}^N \pi_n=1$, it must be
\begin{equation}
k=\frac{\beta}{\sum_{n=1}^N p_n/v_n}.\label{eq:defk}
\end{equation}
Therefore given any $p,x,v\gg 0$, we can choose $R,Y\gg 0$ and distribution $\pi$ such that \eqref{eq:pxYn} holds by choosing $k$ as in \eqref{eq:defk} and setting $R_n=v_n/k$, $Y_n=x_n$, and $\pi_n=\frac{kp_n}{\beta v_n}$. This argument shows that if the consumption function is concave for arbitrary distribution of $(R,Y)$, then the function $h$ in \eqref{eq:defh} is concave in the neighborhood of $s=0$ for arbitrary $p,x,v\gg 0$. The conclusion then follows from Lemma \ref{lem:HLP} and the proof of Theorem \ref{thm:cconcave}.
\end{proof}

\section{Discussion}\label{sec:discussion}

Theorem \ref{thm:cconcave} essentially shows that for the consumption functions in general consumption-saving problems to be concave, constant relative risk aversion (CRRA) is necessary and sufficient. This statement may appear to be in conflict with \cite{CarrollKimball1996}, who have shown that hyperbolic absolute risk aversion (HARA) is sufficient for concavity. However, there is no contradiction between the two results. This is because in Theorem \ref{thm:cconcave}, to avoid unnecessary complications, I have restricted the utility function to have domain $(0,\infty)$ and satisfy appropriate Inada conditions. If the domain is changed to $(-b/a,\infty)$, then it is straightforward to adopt the proof of Theorem \ref{thm:cconcave} to show that $u$ is HARA, as we can see from \eqref{eq:HARA} with $\phi=u'$. This argument shows that for the concavity of consumption functions, HARA is necessary and sufficient (among the class of utility functions with $u'>0$, $u''<0$, and $u'''\ge 0$).

Next, we discuss some subtleties regarding the concavity of consumption functions in the existing literature.

When describing the consumption-saving problem, \cite{CarrollKimball1996} are unclear about the domain of the utility function, Inada conditions, and borrowing constraints. Since the proof of their Lemma 2 uses first-order conditions (that hold for interior solutions), they implicitly assume that consumption can be any value in the domain of the HARA utility (which trivially satisfies the Inada conditions) and that the agent can save or borrow freely. In the consumption-saving problems \eqref{eq:IF}, \eqref{eq:IFstat}, and Lemma \ref{lem:cfunc}, these assumptions were made explicit. However, note that the possibility of borrowing is used in Theorem \ref{thm:cconcave} to show the \emph{necessity} of HARA for the concavity of consumption functions; the possibility of borrowing is not required for the \emph{sufficiency} of HARA for concavity, and in fact \citet[Remark 2.1]{MaStachurskiToda2020JET} show that HARA is sufficient even in the presence of borrowing constraints.\footnote{Strictly speaking, \citet[Remark 2.1]{MaStachurskiToda2020JET} concerns CRRA utility, but this is because the domain of the utility function is restricted to $(0,\infty)$. It is straightforward to handle HARA utility by shifting the domain.}

Theorem \ref{thm:cconcave} and the subsequent discussion suggest that we cannot expect the concavity of consumption functions unless the stochastic process $(\beta_t,R_t,Y_t)_{t=1}^T$ is somehow restricted. Indeed, \cite{GongZhongZou2012} show that in a finite-horizon, deterministic consumption-saving problem in which the discount factor $\beta_t$ and the gross risk-free rate $R_t$ satisfy $\beta_tR_t\ge 1$, the concavity of the absolute risk tolerance $-u'(x)/u''(x)$ is sufficient for the concavity of consumption functions. This condition is much weaker than HARA, as the absolute risk tolerance is affine when $u$ is HARA. However, note that the condition $\beta R\ge 1$ is quite restrictive, and in fact not satisfied in stationary general equilibrium models with infinitely-lived agents (see the discussion in the proof of Theorem 8 of \citealp{StachurskiToda2019JET}). Corollary \ref{cor:exobeta} shows that the possibility of stochastic discounting plays no role in the conclusion.

Finally, in the more recent literature, several authors have studied consumption-saving problems with hyperbolic discounting \citep{CaoWerning2018,MorrisPostlewaite2020}. Although the concavity of consumption functions has not been discussed in this context, since exponential discounting is a special case of hyperbolic discounting, the result in this paper immediately implies that we cannot expect concavity unless the utility function is HARA.

\appendix

\section{Proof of lemmas}

\begin{proof}[Proof of Lemma \ref{lem:cfunc}]
Fix $w>0$. Suppose $(\beta,R,Y)$ takes the value $(\beta_n,R_n,Y_n)$ with probability $\pi_n>0$, where $n=1,\dots,N$. Define
\begin{equation*}
\bar{c}=\sup\set{c>0: (\forall n) R_n(w-c)+Y_n\ge 0}=w+\min_n \frac{Y_n}{R_n}\ge w>0,
\end{equation*}
which is well-defined because $R_n,Y_n>0$. Define $f:(0,\bar{c})\to \R$ 
as the objective function in \eqref{eq:IFstat}. Since
\begin{align*}
f'(c)&=u'(c)-\E[\beta Ru'(R(w-c)+Y)],\\
f''(c)&=u''(c)+\E[\beta R^2u''(R(w-c)+Y)]<0
\end{align*}
because $u''<0$, $f$ is strictly concave. By the Inada condition, we have $\lim_{c\downarrow 0}f'(c)=\infty$ and $\lim_{c\uparrow \bar{c}}f'(c)=-\infty$. Since $f'$ is continuous and strictly decreasing, by the intermediate value theorem there exists a unique $c^*\in (0,\bar{c})$ satisfying $f'(c^*)=0$. By the strict concavity of $f$, this $c^*\eqqcolon c(w)$ uniquely solves the optimization problem \eqref{eq:IFstat}. Letting $s(w)=w-c(w)$, the first-order condition $f'(c(w))=0$ implies the Euler equation \eqref{eq:euler}.

To show the continuous differentiability of $c$, let
\begin{equation*}
F(w,c)=u'(c)-\E[\beta Ru'(R(w-c)+Y)].
\end{equation*}
Since $u''<0$, we obtain
\begin{align*}
\frac{\partial F}{\partial w}&=-\E[\beta R^2u''(R(w-c)+Y)]>0,\\
\frac{\partial F}{\partial c}&=u''(c)+\E[\beta R^2u''(R(w-c)+Y)]<0.
\end{align*}
By the implicit function theorem, $c$ is continuously differentiable and
\begin{equation*}
c'(w)=-\frac{\partial F/\partial w}{\partial F/\partial c}=\frac{-\E[\beta R^2u''(R(w-c)+Y)]}{-u''(c)-\E[\beta R^2u''(R(w-c)+Y)]}\in (0,1).
\end{equation*}
Then $s'(w)=1-c'(w)\in (0,1)$.
\end{proof}

\begin{proof}[Proof of Lemma \ref{lem:gconcave}]
By the discussion preceding Lemma \ref{lem:gconcave}, we have
\begin{equation}
g(s(w))=c(w)\label{eq:gsw}
\end{equation}
for all $w>0$. If $g$ is not concave, we can take $s_1,s_2\in S$ and $\alpha\in [0,1]$ such that
\begin{equation}
g((1-\alpha)s_1+\alpha s_2)<(1-\alpha)g(s_1)+\alpha g(s_2).\label{eq:gnonconcave}
\end{equation}
Let $w_j=s^{-1}(s_j)$ for $j=1,2$. Then $s_j=s(w_j)$ and $g(s_j)=c(w_j)$ by \eqref{eq:gsw}. Define $\bar{c}\coloneqq (1-\alpha)c(w_1)+\alpha c(w_2)$ and $\bar{w}\coloneqq (1-\alpha)w_1+\alpha w_2$. Since by assumption $c$ is concave, we have
\begin{equation}
c(\bar{w})=c((1-\alpha)w_1+\alpha w_2)\ge (1-\alpha)c(w_1)+\alpha c(w_2)=\bar{c}.\label{eq:cineq}
\end{equation}
Therefore
\begin{align*}
c(\bar{w})&\ge \bar{c}=(1-\alpha)c(w_1)+\alpha c(w_2) && (\because \eqref{eq:cineq}) \\
&=(1-\alpha)g(s_1)+\alpha g(s_2)>g((1-\alpha)s_1+\alpha s_2) && (\because \eqref{eq:gsw}, \eqref{eq:gnonconcave}) \\
&=g((1-\alpha)(w_1-c(w_1))+\alpha(w_2-c(w_2)))\\
&=g(\bar{w}-\bar{c})\ge g(\bar{w}-c(\bar{w})) && (\because \eqref{eq:cineq}, \text{$g$ increasing})\\
&=g(s(\bar{w}))=c(\bar{w}), && (\because \eqref{eq:gsw})
\end{align*}
which is a contradiction.
\end{proof}

\begin{proof}[Proof of Lemma \ref{lem:HLP}]
The proof is similar to \citet[Section 3.16]{HardyLittlewoodPolyaInequalities}, who assume $\phi'>0$ and the goal is to characterize the \emph{convexity} of $g$. Since these authors do not cover all the fine details, I reproduce their argument.

To simplify the notation, write $g=g(s;p,x,v)$ and $\sum=\sum_{n=1}^N$. Since $x_n\in I$ for all $I$, $I$ is open, $\phi(I)=(0,\infty)$, and $\phi$ is monotonic, $g$ is well-defined in a neighborhood of 0. Differentiating both sides of $\phi(g(s))=\sum \phi(x_n+v_ns)$ with respect to $s$ twice, we obtain
\begin{align*}
\phi'(g)g'&=\sum p_nv_n\phi'(x_n+v_ns),\\
\phi''(g)(g')^2+\phi'(g)g''&=\sum p_nv_n^2\phi''(x_n+v_ns).
\end{align*}
Eliminating $g'$, we obtain
\begin{equation*}
\phi'(g)^3g''=\phi'(g)^2\sum p_nv_n^2\phi''(x_n+v_ns)-\phi''(g)\left(\sum p_nv_n\phi'(x_n+v_ns)\right)^2.
\end{equation*}
Since by assumption $\phi'<0$, we have $g''(s)\le 0$ if and only if
\begin{equation}
\phi'(g)^2\sum p_nv_n^2\phi''(x_n+v_ns)-\phi''(g)\left(\sum p_nv_n\phi'(x_n+v_ns)\right)^2\ge 0.\label{eq:gconcave}
\end{equation}
If $\phi''\equiv 0$ on $I$, \eqref{eq:gconcave} is trivial. In this case $\phi$ is quadratic and (because $\phi>0$, $\phi'<0$, and $\phi(I)=(0,\infty)$) we must have $\phi(x)=c(ax+b)^{-1/a}$ for $a=-1/2$ and $c>0$.

Therefore it remains to consider the case $\phi''>0$ on $I$. If \eqref{eq:gconcave} holds for all $s$ in a neighborhood of 0 for arbitrary $x\in I^N$ and $p,v\gg 0$, in particular letting $s=0$, we obtain
\begin{align}
&\phi'(g)^2\sum p_nv_n^2\phi''(x_n)-\phi''(g)\left(\sum p_nv_n\phi'(x_n)\right)^2\ge 0\notag \\
\iff & \frac{\phi'(g)^2}{\phi''(g)}\ge \frac{\left(\sum p_nv_n\phi'_n\right)^2}{\sum p_nv_n^2\phi''_n},\label{eq:lb}
\end{align}
where  $\phi_n=\phi(x_n)$ and $\phi_n',\phi_n''$ are defined analogously. 
Applying the Cauchy-Schwarz inequality, we obtain
\begin{equation*}
\left(\sum p_nv_n\phi'_n\right)^2=\left(\sum \sqrt{p_n\phi_n''}v_n\sqrt{\frac{p_n{\phi'_n}^2}{\phi_n''}}\right)^2\le \left(\sum p_nv_n^2\phi_n''\right)\left(\sum \frac{p_n{\phi_n'}^2}{\phi_n''}\right),
\end{equation*}
with equality achieved when $v_n=k\phi_n'/\phi_n''$ for some $k<0$ (so that $v_n>0$). Therefore \eqref{eq:lb} for all $x\in I^N$ and $p,v\gg 0$ is equivalent to
\begin{equation}
\frac{\phi'(g)^2}{\phi''(g)}\ge \sum \frac{p_n{\phi_n'}^2}{\phi_n''}\quad \text{for all $x\in I^N$ and $p\gg 0$}.\label{eq:lb2}
\end{equation}
Now define $y_n=\phi(x_n)$ and 
\begin{equation}
\Phi(y)\coloneqq \frac{[\phi'(\phi^{-1}(y))]^2}{\phi''(\phi^{-1}(y))}.\label{eq:defPhi}
\end{equation}
Noting that $\phi(I)=(0,\infty)$, \eqref{eq:lb2} is equivalent to
\begin{equation}
\Phi\left(\sum p_ny_n\right)\ge \sum p_n\Phi(y_n) \quad \text{for all $p,y\in \R_{++}^N$}.\label{eq:lb3}
\end{equation}
If we take $N=2$, $y_1=x$, $y_2=y$, $p_1=y/2x$, and $p_2=1/2$ in \eqref{eq:lb3}, we obtain
\begin{equation*}
\Phi(y)\ge \frac{y}{2x}\Phi(x)+\frac{1}{2}\Phi(y)\iff \frac{\Phi(y)}{y}\ge \frac{\Phi(x)}{x}.
\end{equation*}
Interchanging the role of $x,y$, it follows that $\Phi(y)/y$ is constant, and hence $\Phi(y)=ky$ for some constant $k>0$ (because $\phi''>0$). Using the definition of $\Phi$ in \eqref{eq:defPhi} and letting $x=\phi^{-1}(y)$, we obtain
\begin{equation*}
\frac{\phi'(x)^2}{\phi''(x)}=k\phi(x)\iff \frac{\phi(x)\phi''(x)-\phi'(x)^2}{\phi'(x)^2}=a\quad \text{for $x\in I$},
\end{equation*}
where $a=1/k-1>-1$. Integrating both sides, we obtain
\begin{equation*}
-\frac{\phi(x)}{\phi'(x)}=ax+b\iff \frac{\phi'(x)}{\phi(x)}=-\frac{1}{ax+b},
\end{equation*}
where $a>-1$ and $b$ is such that $ax+b>0$ (because $\phi>0$ and $\phi'<0$). The forms of $\phi$ and $I$ in \eqref{eq:phiI} follow by integrating both sides with respect to $x$ and considering each case $-1<a<0$, $a=0$, and $a>0$ separately.

Conversely, if $\phi$ and $I$ take one of the forms in \eqref{eq:phiI}, then we have $\Phi(y)=\frac{y}{a+1}$ in \eqref{eq:defPhi}, and the inequality \eqref{eq:lb3} is trivial. Then we can go back the argument to show that $g$ is concave on its domain.
\end{proof}



\end{document}